\newtheorem{lemma}{Lemma:}
\newenvironment{proof}{{\bf Proof:}}{\hfill {\fbox{}}}
\newtheorem{ex}{Example:}
\newtheorem{defi}{Definition}
\newcolumntype{r}{>{\raggedright\arraybackslash}m{20mm}}
\begin{document}
\title{An Optimal Algorithm for Range Search on Multidimensional Points}
\date{\vspace{-5ex}}

\author{T.Hema \thanks{Email: hema@cs.annauniv.edu} and K.S. Easwarakumar\thanks{Corresponding Author.Email: easwara@cs.annauniv.edu} \\Department of Computer Science \& Engineering\\ Anna University, Chennai 600 025, INDIA.}
\maketitle

\begin{abstract} This paper proposes an efficient and novel method to address range search on multidimensional points in $\theta(t)$ time, where $t$ is the number of points reported in $\Re^k$ space. This is accomplished by introducing a new data structure, called BITS \textit{k}d-tree. This structure also supports fast updation that takes $\theta(1)$ time for insertion and $O(\log n)$ time for deletion. The earlier best known algorithm for this problem is $O(\log^k n+t)$ time \cite{b3,mb} in the pointer machine model. \\~\\
{\small\textbf{Keywords:}}
BITS \textit{k}d-tree, Threaded Trie, Range Search. 

\end{abstract}

\section{Introduction}
\label{intro}
\textit{k}d-trees introduced by J.L.Bentley \cite{b5,b8} are multidimensional binary search trees commonly used for storing $k$ dimensional points. They are also used to perform search operations such as exact match, partial match and range queries. Range queries are mostly used in $GIS$ applications to locate cities within a certain region in a map. Similarly, in the geometrical view of a database, one can use orthogonal range search to perform a query. 
Generally, \textit{k}d-trees with $n$ nodes have a height $n$ and hence the complexity for insertion and search are high. Although many multi-dimensional search structures are found in the literature \cite{pk,bcko,ps,h2,h1} they differ from the standard \textit{k}d-trees mainly in the space-partitioning methods used. Recall that a $2$-d tree stores two-dimensional point data of the form $(x,y)$. A $2$-d tree splits primarily on the $x$ coordinate of a point at even level and then on the corresponding $y$ coordinate at the odd level, and so on. Hence, the trees are unbalanced and are not efficient for search operations. Also, the worst case time complexity for range search on a $2$-$d$ tree is $O(\sqrt{n}+t)$, where $t$ is the number of points reported and for $k$ dimensions it is $O(n^{1-1/k}+t)$\cite{b5,lw}. In general, most of the \textit{k}d-tree variants get unbalanced when the data is clustered thereby affecting query operations.
\par 
$PR$ \textit{k}-d tree, Bucket $PR$ \textit{k}-d tree \cite{jo}, $PMR$ \textit{k}-d trees \cite{ns} and Path level compressed $PR$ \textit{k}-d trees\cite{nt} are some of the trie-based kd trees used to store point data. However, these trees are not always balanced, especially when the data is clustered. One of the dynamic versions of \textit{k}-d tree is the divided \textit{k}-d trees \cite{km} for which the range query time is $O(n^{1-1/k}\log^{1/k} n+t)$. 
\par The best known dynamically balanced tree uses bitwise interlaced data \cite{hth} over \textit{k}d-trees mapping $k$ dimensions to one dimension. Although their search time is $O(k(\log~n+t))$  for reporting $t$ points, bitwise interlacing leads to discarded areas during range search. In the case of squarish \textit{k}d-trees \cite{ldjc}, an $x$, $y$ discriminant is based on the longest side of rectangle enclosing the problem space instead of alternating the keys. Recently, hybrid versions of squarish \textit{k}d-tree, relaxed \textit{k}d-tree and median \textit{k}d-trees \cite{mpc} have overcome the problem of height balancing. An amortized worst case efficiency of range search for the hybrid squarish \textit{k}d-trees, relaxed and median trees for \textit{k}-dimensional 
partial match queries are $1.38628~log_{2}n$, $1.38629~log_{2}n$ and  $1.25766~log_{2}n$ respectively. Their experimental results match the aforementioned theoretical results, where they show that the hybrid median trees outperform the other variants. However, as far as query handling is concerned, these structures perform only partial match queries for two dimensions efficiently. The most recent work in the pointer machine model is an orthogonal range reporting data structure with $O(n(\log~n/\log log n)^d)$ space that address range queries in $O(\log~n(\log n/log~log~n)$ time, where $d\geq 4$ \cite{af}. 
\par
Range trees of Bentley and Maurer \cite{b8,b3} are yet another class of balanced binary search trees used for rectangular range search which showed improvement in the  query time of $O(\log^k n+t)$ over $O(n^{1-1/k}+t)$ of $k$d-trees, where $k$ is the dimension for a set of $n$ points and $k$ is the number of reported points. This was later improved to $O(\log^{k-1}n+t)$ using fractional cascading in layered range trees\cite{w} but the space requirements are relatively high of $O(n\log^{k-1}n)$. A \textit{k}d-Range $DSL$-tree performs $k$-dimensional range search in $O(\log^kn+t)$ time was proposed in \cite{mb}. 
\par 
Recently, Chan et.al \cite{clkm} have proposed two data structures for $2d$ orthogonal range search in the word $RAM$ model. The first structure takes $O(n~lg~lg~n)$ space and $O(lg~lg~n)$ query time. They show improved performance over previous results \cite{als} of which $O(n~lg^\epsilon~n)$ space and $O(lg~lg~n)$ query time, or with $O(n~lg~lg~n)$ space and $O(lg^2~lg~n)$ query time. The second data strucure is based on $O(n)$ space and answers queries in $O(lg^\epsilon~n)$ time that outperforms previous $O(n)$ space data structure  \cite{nek}, answers queries in $O(lg~n/lg~lg~n)$ time.
\par
Furthermore, they also propose an efficient data structure for $3$-$d$ orthogonal range reporting with $O(n~{lg^{1+\epsilon}}+n)$ space and $O(lg~lg~n+k)$ query time for points in rank space where $\epsilon > 0$. This improves their previous results \cite{ct} with $O(n~lg^2~n)$ space and $O(lg~lg~n+k)$ query time, or with $O(n~lg^{1+\epsilon}~n)$ space and $O(lg^2~lg~n+k)$ query time, where $k$ points are reported. Finally they have extended range search to higher dimensions also.
     
Since such range queries are common among multi-dimensional queries in database applications, we have mainly considered an orthogonal range search on multi-dimensional points. 
\section{Our Contributions}
In this work, we make use of the $BITS$-tree \cite{ekh}, a segment tree variant that performs stabbing and range queries
on segments efficiently in logarithmic time. Most importantly, the distribution of the data points (uniform or skewed) does
not affect the height of the $BITS$-tree and in turn facilitates faster search time. Here, we actually use the $BITS$-tree structure to store points related to each dimension and thereby form a multi-level tree, called $BITS$ \textit{k}d-tree. In addition, certain nodes of the $BITS$-tree associate to a variant of the trie data structure, called \textit{threaded trie}, to facilitate fetching a required node in constant time. Unlike \textit{k}-d trees, it does not associate co-ordinate axis, level wise, for comparison to locate or insert a point. Instead, the tree at the first level has nodes with a key on only distinct values of first co-ordinate of the points. Therefore, this tree corresponds to the one dimensional data. This tree is then augmented with another tree at second level and there in key values of the nodes associated with distinct first two co-ordinates of the points. In general, $i^{th}$ tree corresponds to the distinct first $i$ co-ordinates of the set of points given. Moreover, in each tree, the inorder sequence provides the sorted sequence.  
That is, $BITS$ \textit{k}-d trees is a multi-level tree, and its construction is illustrated in the subsequent sections.
\subsection{$BITS$-Trees}
Originally, the $BITS$-tree (\textbf{B}alanced \textbf{I}norder \textbf{T}hreaded \textbf{S}egment Tree) \cite{ekh} is a dynamic structure that stores segments, and also answers both stabbing and range queries efficiently. Unlike segment trees, it also permits insertion of segment with any interval range.
\begin{defi}
A $BITS$-tree is a height balanced two-way inorder-threaded binary tree $T$ that satisfies the following properties.
\begin{enumerate}[1.]
\itemsep=0pt
\parskip=0pt
\item Each node $v$ of $T$ is represented as $v([a,b],L)$, where $[a,b]$ is the range associated with the node $v$, and $L$ is the list of segments containing the range $[a,b]$, i.e if $[c,d] \in L$ then $[a,b]\subseteq[c,d]$.  
\item Given $v_1([a_1,b_1],L_1) \neq v_2([a_2,b_2],L_2)$, then
	\[
	 \left[a_1, b_1\right]\cap \left[a_2, b_2\right] = \left \{ \begin{array}{lcl} \left[b_1\right]~ if ~ a_2=b_1 \\
																										 \left[b_2\right] ~ if ~ a_1=b_2 \\
																									 	  \phi ~~ otherwise
                           				\end{array} \right .
\]
i.e ranges can either overlap only at end points or do not overlap at all.  
\vskip -1 em          
\item Suppose $v_1([a_1,b_1],L_1)$ appears before $v_2([a_2,b_2],L_2)$ in the inorder sequence, then $b_1 \leq a_2$. 
\item It has a special node, called $dummy$ node denoted by $D$, with range and list as $\phi$(empty). 
\item Suppose $v_1([a_1,b_1],L_1)$ and $v_n([a_n,b_n],L_n)$ are the first and last nodes of the inorder sequence respectively, then $InPred(v_1)=InSucc(v_n)=D$, and the range, say $[a,b]$, of any node contained in $[a_1,b_n]$, i.e $[a,b] \subseteq [a_1,b_n]$. 
\end{enumerate}
\end{defi}
Here, the functions InPred() and InSucc() respectively returns inorder predecessor and successor.
A sample $BITS$-tree is shown in Figure \ref{f1}. 
\begin{figure*}
\centering
\includegraphics[width=\textwidth]{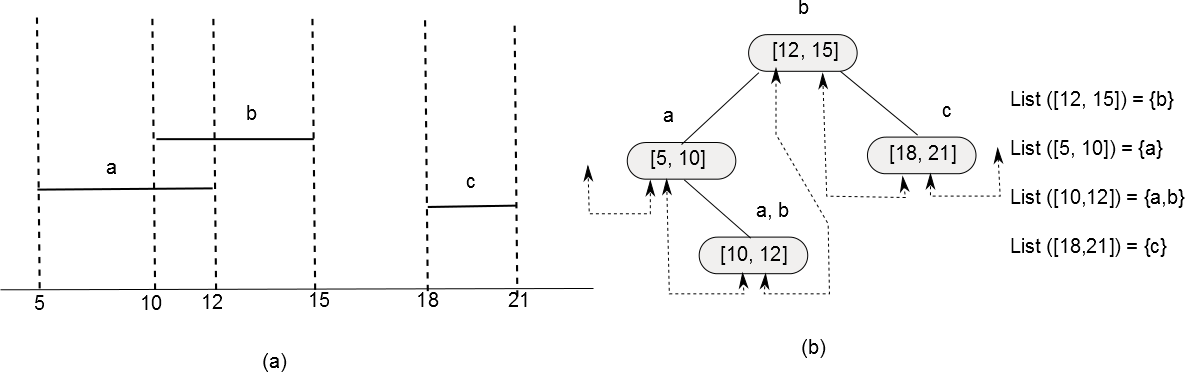}
\caption{(a) Set of segments.(b) $BITS$-Tree for the given segments.}
\label{f1}
\end{figure*}\\Note that the dangling threads actually point to a \textit{dummy} node, which is not shown in the figure.
\par
The $BITS$-tree is originally developed for storing segments, but we use this for a different purpose of storing points. Thus, we modify this structure to suit our requirement as described below.
\begin{enumerate}[1.]
\itemsep=0pt
\parskip=0pt
\item Each node $v([a,b],L)$ is replaced by $v(p,L',T)$, where $p$ is a point in $\Re^k$, $k \ge 1$, and $L'$ is a pointer to the list of collinear points in dimension $k+1$, having $p$ for the first $k$ co-ordinates. However, this list is maintained in the tree at the next level, which is described in section \ref{tt}. Now, $T$ is either null or a pointer to a \textit{threaded trie}, which is elaborated in the next section.
\item For any two points $p_1$ and $p_2$ stored in a tree, $p_1\neq p_2$.
\item Suppose $v_1(p_1,L_1',T_1)$ appears before $v_2(p_2,L_2',T_2)$ in the inorder sequence, then $p_1 < p_2$ as per the following definition.  
\end{enumerate}
\begin{defi}
Let $p_1=(x_1^1, x_2^1,\ldots x_k^1)$ and $p_2=(x_1^2, x_2^2,\ldots x_k^2)$ be two points in a $k$-dimensional space, then
\begin{enumerate}[a.]
\itemsep=0pt
\parskip=0pt
\vskip -1em
\item $p_1=p_2$ implies $x_j^1=x_j^2$ for each $j$=1, 2,\ldots k.
\item $p_1 < (or >) p_2$ implies $head(p_1,j)=head(p_2,j)$ and $x_{j+1}^1$ $<$ (or \textgreater) $x_{j+1}^2$ for some $j$.
\end{enumerate}
\label{d4}
\end{defi}
\par
In the subsequent sections, for better clarity, we use hyphen($-$) for a certain parameter of a node to denote that the particular parameter is irrelevant with respect to the context. For instance, $(p,-,T)$ denotes that the list contents are irrelevant for that point $p$ at this time. 
\subsection{Threaded Trie}
\begin{figure}[!ht]
\centering
\includegraphics[width=1.5in,height=2.1in]{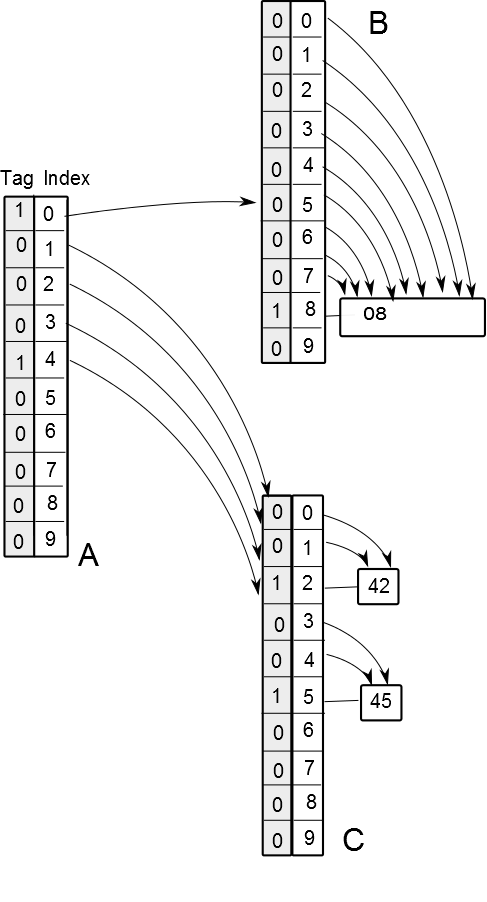}
\caption{A sample threaded trie}
\label{bitsfig1}
\end{figure}
Threaded tries are variants of tries that consists of two types of nodes viz. trie node and data node. For instance, in Figure \ref{bitsfig1}, $A$, $B$ and $C$ are trie nodes and the rest are data nodes. Unlike in tries, the trie node here does not have a field for blank ($\not{\hspace{-0.1cm}b}$). However, each of these trie nodes contain two segments. One is the index pointer and the other is a tag value, which is either $0$ or $1$, where $0$ denotes the corresponding index point in a thread and otherwise it will be $1$. Here, all null pointers are replaced by threaded pointers, which point to the next valid node, if one exists. For instance, the thread pointers of $1$, $2$, $3$ of node $A$ points to the node $C$, as this is the next valid node. Similarly, thread pointers of $0$ and $1$, in $C$ points to the data node $42$. Note here that ordering on the nodes provides the sorted sequence. Also, data nodes appear at the same level. This is accomplished by having uniform width for all data. For instance, the data $8$ is treated as $08$ in Figure \ref{bitsfig1}. 
\label{tt}
\subsection{Construction of Multi-level $BITS$-Tree}
\label{cons}
Multi-level $BITS$-trees are constructed using a collection of $BITS$-trees one at each level, and interlinking the trees of two consecutive levels in a specified manner, which are due to the following definitions. These multi-level $BITS$-trees are termed here as $BITS$-\textit{k}d trees.
\begin{defi}
\label{d1}
Given a point $p = (x_1, x_2, \ldots  x_k)$ and an integer $l \le k$, the head of $p$ and tail of $p$ are defined
respectively as $head(p,l)=(x_1, x_2, \ldots x_l)$ and $tail(p,l)=(x_{k-l+1}, x_{k-l+2},\ldots x_k)$. Also, 
having $(head(p,l),y_1,y_2,\ldots y_m)=(x_1, x_2, \ldots x_l, y_1, y_2, \ldots y_m)$, leads to $(head(p,l),\\tail(p,k-l))=p$.
\end{defi}
\begin{defi}
\label{d2}
\sloppy Given $S$ as the set of points in $\Re^k$ and $\left| S \right|$=n, the set $S_1$ is defined as, $S_1=\bigcup_{i=1}^{n} \{(x) ~ | ~ p \in S  ~  and  ~  head(p,1)=(x) \}$. That is, $S_1$ is the set of distinct $x$ values of the points in $S$. In general, $S_j = \bigcup_{i=1}^n \left\{(x_1, x_2, \ldots x_j) ~ | ~ p \in S\right .$ and  $\left . head(p,j) = (x_1, x_2, \ldots x_j) \right\}$, where $1 \leq j \leq k$.
\end{defi}
\begin{defi}
For a point $p=\left(x_1,x_2,\ldots x_j\right)$ in $S_j$, the term $x_j$ is said to be the dimensional value of $p$ as the set of points in $S_j$ is used to construct $j^{th}$ level $BITS$-tree.
\end{defi}
\begin{figure*}[!t]
\centering
\includegraphics[width=.9\textwidth]{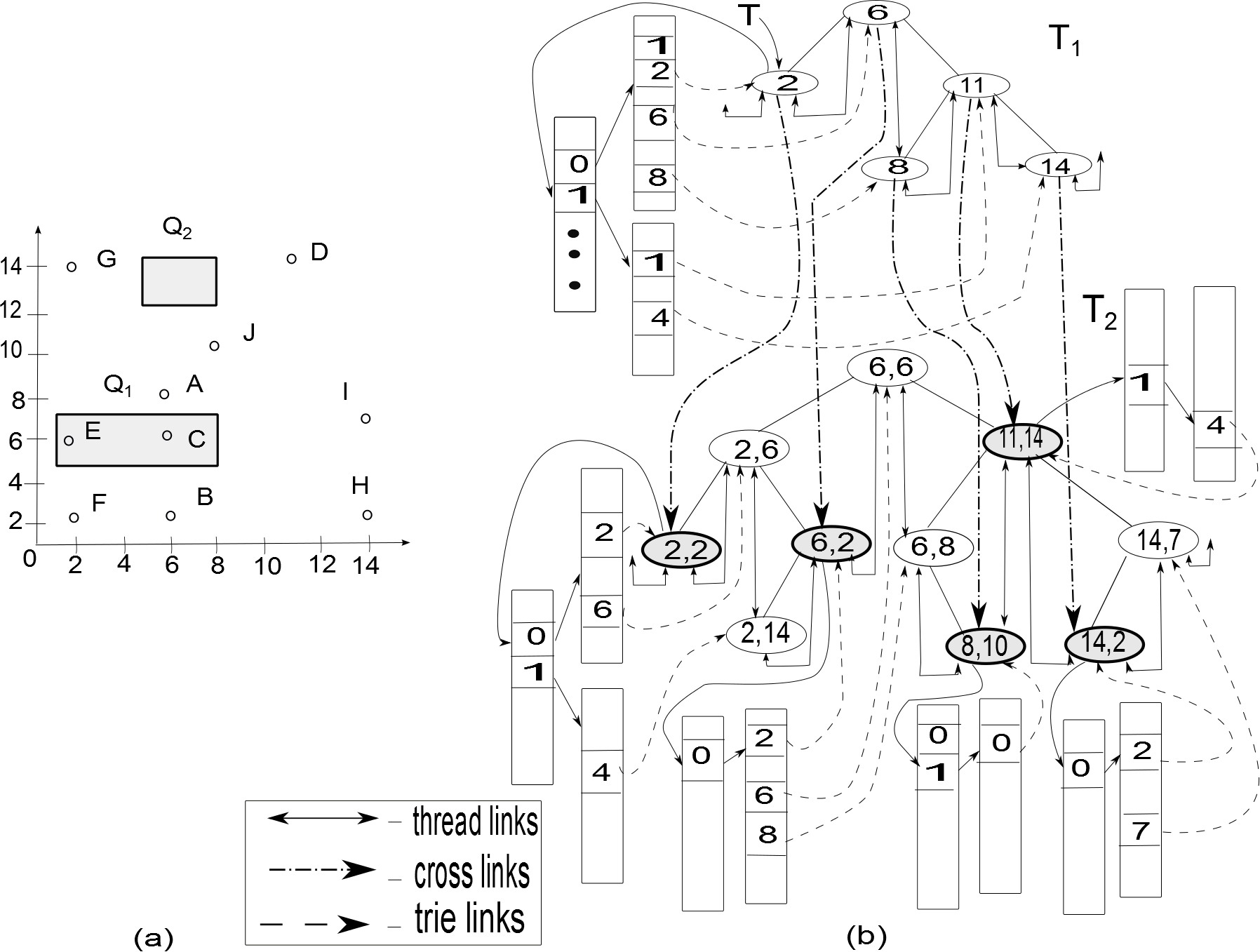}
\caption{A $BITS$ 2d-tree: (a) Spatial representation of  points. (b) $BITS$-2d tree for points shown in (a).}
\label{bits2d}
\end{figure*}

\begin{defi}
\label{d3}
A $BITS$ \textit{k}d-tree is a multi-level tree, which is constructed as follows. 
\begin{enumerate}[1.]
\itemsep=0pt
\parskip=0pt
\item Create separate $BITS$ trees, $T_j$ for each $S_j$, $1 \leq j \leq k$. 
\item Let $X_j=(x_1,x_2,\ldots x_j)$. Now, for each node, say $v_j=\left(X_j,L,-\right)$, of $T_j$, $1 \leq j < k$, the list $L$ points to the node $v_{j+1}=((X_j,x^{\prime}_{j+1}),-,-)~of~T_{j+1}$,\\~where~$x_{j+1}^{\prime} = min \{x_{j+1}~| ~ head(p,j) = X_j$~and~$(head(p,j), x_{j+1}) \in S_{j+1}\}$. We term these links as \textit{cross links}, and the node $v_{j+1}$ as a cross link node in $T_{j+1}$.
\item In $T_1$, there is only one cross link node, which is the first node in the in order sequence, and the tree pointer always points to this node.
\item For each node $v=((X_{j-1},x_j^{\prime}),-,T)$ in $T_j$, $1 \leq j \leq k$, $T$ is pointer to the threaded tree if $v$ is a cross link node, and otherwise $T$ is set to be null.
\item For every cross link node $v_j=((X_{j-1},x_j^{\prime}),-,T)$ in $T_j$, the data node of $T$ for a key, say $k^{\prime}$, points to the node $((X_{j-1},k^{\prime}),-,-)$ in $T_j$. That is, $T$ provides links to the nodes in $\left\{((X_{j-1},x_j),-,-)|(X_{j-1},x_j) \in S_j\right\}$ and these links are termed as \textit{trie links}.    
\end{enumerate}
\end{defi}
A $BITS$ 2d-tree for the sample points in Figure \ref{bits2d}(a) is shown in Figure \ref{bits2d}(b).
Since, $BITS$ \textit{k}d-trees are multi-level trees with binary inorder threaded search trees at each level, the height of the trees at each level is $O(\log~n)$. Also, each node in $T_{i-1}$ has a \textit{cross link} to a node in $T_{i}$, which has the least value for the $i^{th}$ co-ordinate with respect to the head value of the node in $T_{i-1}$. Note here that at least one such point exists. This link is useful to locate a list of collinear points in the $i^{th}$ dimension, associated with a point in $T_{i-1}$. Also, the trie links are useful to locate a point in a given range window in constant time. The cross link and trie links also make the structure much suitable to address range queries efficiently.
\par Normally, \textit{k}d-trees perform insertion by a simple comparison between the respective co-ordinates at each level. However, deletion is  tedious due to \textit{candidate replacement}. This is because, candidate for replacement can be anywhere in the subtree. Also, it requires a little more work when the right subtree is empty. Now, to find a candidate for replacement, it is required to find the smallest element from the left subtree to avoid violation of the basic rules of \textit{k}d-trees and then it is required to perform a swap of left and right subtrees, as many possible candidate keys exist in the left subtree. To handle such a situation, we make use of a collection of $BITS$-trees, one for each dimension. Here, deleting a point may or may not require a replacement, but if so, it is only the inorder successor and that can be located in $\theta(1)$ time as inorder links exist for each node. Also, the \textit{cross links} that exist between two consecutive levels, practically provide a faster search on next level trees.
\par Another advantage of this structure is that when a node is pruned out at a particular level, it need not be considered in the subsequent levels. That is, nodes that have \textit{head} values as these will be ignored in the subsequent levels. To the best of our knowledge, there is no such structure using multi-levels of balanced binary search trees, with two-way threads introduced in this work, for storing point data and to perform range search efficiently.  

\subsection{$2d$-Range Search for Window Query}
\label{2drs}
Given a rectangular range in the form of a window, a range query finds all points lying within this window. Let $[x_1: x_2]~\times~[y_1: y_2]$ be a given query range. First, we use a trie stored in the first node, which is the only cross link node, in the tree at level $1$ to find the smallest point larger than or equal to $x_1$ in $S_1$. The non-existence of such a point is determined from the trie itself. On the other hand, once such a point $p$ is located, subsequent points that fall within $[x_1:x_2]$ can be determined using the inorder threads as the inorder sequence is in sorted order. Let us say that the reported set of points as $S^{\prime}$. However, if the dimensional value of the point $p$ is greater than $x_2$, it implies absence of required candidates.
\par
Now, using cross links of the node in $T_1$, that corresponds to each point in $S^{\prime}$, further search is performed at $T_2$ in a similar fashion. Note that each cross link node in $T_2$ has a trie structure that supports quick access to a node in $T_2$, where the dimensional value is in $[y_1:y_2]$. In case, the dimensional value of the cross link node is within $[y_1:y_2]$, the respective trie structure need not be looked into, instead the inorder threads are used to find the remaining candidates. 
\begin{ex}
For instance, let us consider Figure \ref{bits2d} with search range $[1:8]\times[5:7]$. First, we use the only cross link node present in $T_1$. As its dimension value, ie. $2$ lies within the range $[1:8]$, we do not use the respective trie. Instead, we use the inorder threads to identify the candidate points, which are 2,6 and 8. Now, for each of these candidates, further search is continued respectively from $(2,2)$,$(6,2)$ and $(8,10)$ in $T_2$, as these are the corresponding cross link nodes. Now, by looking at the tries of these cross link nodes we find a point whose dimensional value is the smallest one is $[5:7]$. Thus, tries of $(2,2)$ yields $(2,6)$, $(6,2)$ yields $(6,6)$ and $(8,10)$ yields nothing. Further, by performing inorder traversal from $(2,6)$ and $(6,6)$, the final reported points for $Q_1$ are $E(2,6)$ and $C(6,6)$. Also, for $Q_2$ i.e, $([5:8]\times[12:14])$, no points will be reported.
\end{ex}
\par Notice that one can stop the search at $T_1$ without traversing $T_2$ if there is no candidate node in $T_1$ within the given range. This is also applicable in $k$-d trees because if there is no candidate node in the higher tree, the lower level trees need not be searched. Thus, this structure prunes the search in some cases and thereby practically reduces the time for reporting a query.
\subsection{\textit{k}-d Range Search}
A range search on $k$-dimensional points can be performed by extending the search on $T_3, T_4, \ldots  T_k$, similar to that of $T_2$ as in the case of 2d range search. However in $T_1$ and $T_2$, we need to perform the search as described for 2d range search. That is, when we take the query range as $[x_1:x_1^{\prime}]\times[x_2:x_2^{\prime}]\times \ldots [x_k:x_k^{\prime}]$, the search is performed to find candidates within the range of $[x_1:x_1^\prime]$ in $T_1$, $[x_2:x_2^{\prime}]$ in $T_2$, $[x_3:x_3^{\prime}]$ in $T_3$, and so on. Finally, the points reported from $T_k$ will be in $Q$.
It is important to note that the search requires comparison of keys within the given range of the particular co-ordinate dimension in each of $T_1,T_2,\ldots T_k$. This simplifies subsequent searches at the next level. 
\section{Implementation Details}
\subsection{Two Dimensions}
\label{td}
Given a set of two dimensional points in $\Re^2$, a two-level tree ($BITS$2d-tree) is constructed in $O(n)$ time as a point may require at most two insertions, one at $T_1$ and the other at $T_2$. But the position at which insertion is to be made in $T_1$ and $T_2$ could be determined in constant time as described in the proof of Lemma \ref{ins}. Thus, to insert $n$ nodes requires $O(n)$ time. Also, it may be required to create a \textit{cross link} for each node of $T_1$ in the case of $BITS$ 2d-tree. Since, $T_1$ cannot have more than $n$ points, the number of cross links created cannot exceed $n$. Also, the number of trie links created cannot exceed the number of nodes in $T_1$ and $T_2$, which is $O(n)$. Moreover, construction of a trie requires only constant time as the height of the trie is constant due to fixed size of the key. Thus, all these factors lie within $O(\log n)$ for each insertion.
\par
Regarding space requirements in a $BITS$ 2d-tree, it is $O(n)$, as the second tree is the one that contains all the $n$ points, and fewer or equal number of points in the first tree. Also, the number of trie nodes is $O(n)$ as the height of a trie is constant which is due to the size of(number of digits) of the key. Thus, we obtain the following lemma.
\begin{lemma}Construction of $BITS$ 2d-tree for $n$ points requires $O(n)$ time and $O(n)$ space.
\end{lemma}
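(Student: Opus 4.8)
The plan is to decompose the total construction cost into three disjoint contributions and bound each by $O(n)$: (i) the insertions that build the two trees $T_1$ and $T_2$, (ii) the creation of the \textit{cross links} joining the two levels, and (iii) the construction of the threaded tries hanging off the cross link nodes. Since these contributions are independent, an $O(n)$ bound on each yields the claimed total, and an analogous counting argument will give the space bound.

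First I would bound the number of nodes. Because $S_1$ consists of distinct first coordinates and $S_2$ of distinct coordinate pairs, and both are drawn from the $n$ given points, we have $|S_1| \le n$ and $|S_2| \le n$; hence $T_1$ and $T_2$ contain at most $n$ nodes each. Processing a single point triggers at most one insertion in $T_1$ (its $x$-value may already be present) and at most one in $T_2$, so the construction performs at most $2n$ insertions. By the proof of Lemma \ref{ins}, the location at which a new node must be spliced is found in $O(1)$ time using the trie of the relevant cross link node, and since the trees are inorder threaded, splicing the node and repairing the two threads is also $O(1)$. Summing over the at most $2n$ insertions gives $O(n)$.

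Next I would account for the cross links and the tries. Each node of $T_1$ owns exactly one cross link into $T_2$, and $T_1$ has at most $n$ nodes, so at most $n$ cross links are created, each in $O(1)$ time. For the tries, I would use the fact that every key has fixed width, so that all data nodes lie at the same level and each trie has constant height; consequently inserting a trie link and threading a branch costs $O(1)$, and the number of trie nodes generated over the whole construction is $O(n)$, since the trie links cannot exceed the combined number of nodes in $T_1$ and $T_2$. Adding these to the insertion cost keeps the total at $O(n)$. The space bound follows by the same counting: $T_1$ and $T_2$ store $O(n)$ nodes, the constant-height tries contribute $O(n)$ nodes, and the cross and trie links amount to $O(n)$ pointers, so the structure occupies $O(n)$ space overall.

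The main obstacle I anticipate is pinning down the $O(1)$ cost of a single insertion: a naive height-balanced tree would spend $O(\log n)$ per insertion on rebalancing, which would inflate the total to $O(n\log n)$ rather than $O(n)$. The argument therefore hinges entirely on Lemma \ref{ins}, whose position-finding and threaded-splicing guarantees must be invoked to certify that each insertion, and not merely the search for its position, is truly constant time. The trie-height claim must likewise be justified from the fixed key width (the uniform-width encoding that places all data nodes at one level) rather than simply assumed, since it is exactly what makes the per-trie work, and hence the aggregate trie cost, linear.
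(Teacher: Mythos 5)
Your proposal is correct and follows essentially the same route as the paper: the paper likewise decomposes the cost into at most two insertions per point with constant-time position-finding via the tries and cross links (deferring to the argument around Lemma \ref{ins}/Lemma \ref{t1}, where at most one rotation per insertion is claimed), plus $O(n)$ cross links and $O(n)$ constant-height trie nodes, with the space bound obtained by the same counting. Your explicit flagging of the rebalancing cost as the crux is a welcome sharpening, but it does not change the argument.
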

Now, searching a candidate node in $T_1$ is done through the trie in $T_1$ and that requires only constant time as the height of the trie is fixed. Once such a point is identified, subsequent points are identified through inorder threads. Thus. for identifying candidate points, it takes only $\theta(t_1)$ time, if there are $t_1$ candidate points in $T_1$. Now, using cross links of each of these nodes, we can locate the required tries in constant time and further search is to be done in a similar fashion as described earlier. Thus, it leads to the following lemma.
\begin{lemma}
\sloppy Range search for window query using $BITS$ 2d-tree can be addressed in $\theta(t)$ time,
where $t$ stands for the number of points reported. 
\end{lemma}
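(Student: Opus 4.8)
The plan is to bound the total query time by splitting the work across the two levels of the tree and charging all but a lower-order part of it to the $t$ reported points. Throughout I would rely on three facts already established: a threaded trie has height fixed by the key width, so a trie lookup (finding the smallest stored value not below a given bound, or certifying its absence) costs constant time; each tree's inorder thread sequence is sorted by Definition \ref{d4}; and by the construction in Definition \ref{d3} every node of $T_1$ carries a cross link to an existing node of $T_2$, while the cross link nodes of both trees carry tries giving constant-time access by key.

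First I would account for the cost incurred in $T_1$. Using the trie of the unique cross link node of $T_1$, I locate the smallest first coordinate that is at least $x_1$ in constant time, or conclude that no candidate exists. Walking the inorder successor threads from that node and emitting each node whose dimensional value lies in $[x_1:x_2]$, stopping at the first value exceeding $x_2$, enumerates the set of $x$-candidates; if there are $t_1$ of them this phase costs $\theta(t_1)$, with only one additional constant-cost probe at the terminating node. Next I would account for $T_2$. For each $x$-candidate, its cross link delivers the matching cross link node of $T_2$ in constant time; the trie there locates the smallest point whose second coordinate is at least $y_1$ in constant time, and the sorted inorder threads then enumerate exactly the collinear points whose $y$-value lies in $[y_1:y_2]$, each at constant cost, halting at the first $y$ exceeding $y_2$. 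Since the points enumerated in this phase are precisely those of the answer sharing the given first coordinate, summing over all $x$-candidates shows the total enumeration at $T_2$ costs $\theta(t)$.

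Combining the phases gives a raw bound of $\theta(t_1 + t)$ plus a constant overhead per $x$-candidate, so the main obstacle is to absorb the $\theta(t_1)$ contribution into $\theta(t)$. The clean case is when every $x$-candidate yields at least one reported point: then $t_1 \le t$, the per-candidate overhead is charged to one of that candidate's reported points, and the bound collapses to $O(t)$. The delicate case is an $x$-candidate all of whose collinear points fall outside $[y_1:y_2]$, contributing constant work but no output; here I would argue that the smallest $y$-value certified by that candidate's cross link trie detects the emptiness in constant time, so that the effort spent on fruitless strips is already subsumed in the $\theta(t_1)$ term rather than creating a separate penalty. Matching the upper bound $O(t)$ against the trivial $\Omega(t)$ needed merely to list $t$ points then yields the tight $\theta(t)$ claimed.

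I expect this empty-strip accounting to be the crux, since it is precisely the point at which an honest analysis risks degrading to $\theta(t_1+t)$: the structure charges a constant probe to each $x$-candidate whether or not it contributes to the answer, and nothing in the construction forbids many $x$-candidates whose vertical strips are empty within $[y_1:y_2]$. I would therefore either establish the implicit hypothesis that surviving $x$-candidates are non-empty (so $t_1=O(t)$), or make explicit that the stated $\theta(t)$ is the output-sensitive cost of the reporting phase, with the $\theta(t_1)$ scanning term folded in under the assumption that candidate strips contribute to the output.
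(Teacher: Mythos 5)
Your decomposition is exactly the paper's: a constant-time probe of the unique cross link node's trie in $T_1$ to find the leftmost $x$-candidate, an inorder-thread walk costing $\theta(t_1)$ to enumerate the $t_1$ candidates, and then, for each candidate, a constant-time cross link followed by a trie probe in $T_2$ and an inorder walk that emits the answer points. The paper's own justification (the paragraph preceding the lemma in Section \ref{td}) consists of precisely these observations and then asserts the $\theta(t)$ bound without further argument.

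The empty-strip issue you isolate is a genuine gap, and it is present in the paper's argument as well, not only in your write-up. Nothing in Definition \ref{d3} prevents a query window for which many (or all) of the $t_1$ candidates in $T_1$ have every collinear point outside $[y_1:y_2]$; each such candidate still incurs a constant-time cross link and trie probe in $T_2$, so the honest total is $\theta(t_1+t)$, and $t_1$ can be as large as $n$ while $t=0$. The two repairs you sketch are the only ways out: either assume every surviving $x$-candidate contributes at least one reported point (so $t_1\le t$ and the per-candidate overhead can be charged to output), or weaken the claim to $O(t_1+t)$. Neither follows from the structure as defined, so your proposal does not actually establish the lemma as stated --- but neither does the paper; you have correctly located the step at which the argument fails rather than papering over it, and the constant-time trie and cross-link bookkeeping in the rest of your account is sound and matches the intended mechanism.
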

\subsection{Higher Dimensions}
\label{hd}
A straight forward extension of $BITS$ 2d-tree to $k$ dimensions is made easy by connecting (cross links) to the corresponding nodes in the tree at next level. Unlike range trees \cite{b7} which build another range tree at a given node from the main tree, we maintain the trees $T_1, T_2, \ldots T_k$, dimension-wise such that the inorder traversal provides an ordered sequence of points stored in the tree. This definitely reduces the overall time taken for range search across $k$ dimensions. As described in the previous section, the time required to find a candidate point in any $T_i$, $1\leq i \leq k$, is only a constant. Thus, it leads to the following lemma.
\begin{lemma}
Let $S$ be a set of points in $k$-dimensional space, $k\geq 1$. A range search on $BITS$~\textit{k}d-tree reports all points that lie within the rectangular query range in $\theta(t)$ time, where $t$ is the number of points reported.
\label{rangeq}
\end{lemma}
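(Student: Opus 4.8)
The plan is to prove the statement by induction on the dimension $k$, taking the two–dimensional range search lemma (which already yields $\theta(t)$ time) as the base case and relying on the structural primitives of the $BITS$ \textit{k}d-tree established above. Before the induction I would isolate the three constant-time primitives on which the whole argument rests: (i) at any cross link node the associated threaded trie returns, in $O(1)$ time, the smallest dimensional value that is $\ge$ the lower end of the query interval for that coordinate (or reports its absence), since the trie has fixed height because the key width is fixed; (ii) the two-way inorder threads let the search advance from one in-range node to the next in sorted order in $O(1)$ time; and (iii) a cross link moves from a node of $T_i$ to the corresponding minimal node of $T_{i+1}$ in $O(1)$ time.

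I would then recast the search as a level-by-level procedure and set up a cost accounting. A single entry into $T_i$, triggered by a cross link from a candidate of $T_{i-1}$, costs one trie lookup to locate the first in-range node, followed by one inorder step per further in-range node; hence that entry costs $O(1)$ plus $O(1)$ per candidate it produces. Summing over all entries, and writing $c_i$ for the total number of candidate nodes visited in $T_i$ over all branches, the overall running time is $O\!\left(\sum_{i=1}^{k} c_i\right)$: level $1$ has a single entry, and for $i \ge 2$ the number of entries into $T_i$ equals $c_{i-1}$, which is already subsumed in the sum.

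The heart of the proof is to show $\sum_{i=1}^{k} c_i = \theta(t)$. The clean end of this chain is the top: the candidates surviving at $T_k$ are exactly the points lying in the full query box, so $c_k = t$. The remaining task is to relate the intermediate counts $c_1,\dots,c_{k-1}$ to $t$. Here I would argue that every candidate node visited in $T_i$ carries a head consistent with the query on the first $i$ coordinates, and is examined only because the ordering of the tree and its cross link direct the search toward points extendable along $T_{i+1},\dots,T_k$; formalising this into a bound $c_i = O(t)$ for each $i$ would complete the estimate and, combined with the lower bound $c_k = t$, give the $\theta(t)$ claim.

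The step I expect to be the main obstacle is precisely this last one: ruling out, or charging away, \emph{dead-end} branches. A node may satisfy the range constraints on coordinates $1,\dots,i$ yet have no point sharing its head whose $(i+1)$-th coordinate falls inside the query interval, so that the cross link into $T_{i+1}$ performs a trie lookup and a range test but produces no candidate and ultimately no output. Each such branch costs only $O(1)$, but there can be as many as $c_i$ of them, and bounding their number by $O(t)$ rather than by $|S_i|$ is the delicate point. The argument would have to exploit that the trie lookup at level $i+1$ returns the \emph{smallest} in-range value, so that a successful cross link immediately exposes a genuine candidate whenever one exists; whether the number of fruitless cross links can then be amortised against the reported points, giving $O(t)$ rather than only $O(|S_i|)$, is the decisive issue on which the $\theta(t)$ bound stands or falls.
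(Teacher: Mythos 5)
Your cost accounting is sound and considerably more careful than anything in the paper: the paper gives no real proof of this lemma, merely asserting that ``the time required to find a candidate point in any $T_i$ is only a constant'' and passing from the two-dimensional discussion to the $k$-dimensional claim. The gap you isolate at the end --- bounding the intermediate candidate counts $c_1,\dots,c_{k-1}$ by $O(t)$, i.e.\ charging away the dead-end branches --- is exactly the step the paper never addresses, and it cannot be closed. The trie lookup at a cross link node of $T_{i+1}$ guarantees that a fruitless entry costs only $O(1)$, but it does nothing to limit \emph{how many} fruitless entries occur. Already for $k=2$, take $n$ points with $n$ distinct $x$-coordinates, all lying in $[x_1:x_2]$, whose $y$-coordinates all lie outside $[y_1:y_2]$. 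The algorithm of Section \ref{2drs} visits every node of $T_1$ in the $x$-range via inorder threads, follows each cross link into $T_2$, and performs a trie lookup that returns nothing; hence $c_1=n$ and the running time is $\Theta(n)$ while $t=0$. The honest bound for the algorithm as described is $O\bigl(t+\sum_{i=1}^{k-1}c_i\bigr)$, where $c_i$ is the number of nodes of $T_i$ whose keys fall in the first $i$ query intervals, and this is not $O(t)$ in general.

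So your proposal correctly reduces the lemma to the one claim that matters and correctly flags it as the point on which the $\theta(t)$ bound stands or falls; the resolution is that it falls. No amortisation against reported points is available, because the adversary can make every surviving prefix a dead end at the next level. A true statement would either adopt the output-sensitive bound above, or add the (unenforced) hypothesis that every prefix surviving level $i$ extends to at least one reported point, in which case your charging scheme --- each entry into $T_{i+1}$ charged to some point it eventually reports --- does go through and yields $O(kt)=O(t)$ for constant $k$.
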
 
\begin{lemma}
\label{t1}
Given a set of $n$ points, a $BITS$~\textit{k}d-tree can be constructed in $O(n)$ time and $O(n)$ space.
\end{lemma}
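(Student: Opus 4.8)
The plan is to treat $k$ as a fixed constant and reduce the claim to two independent accounting arguments: a bound of $O(n)$ on the total number of structural objects (tree nodes, tries, cross links, trie links) for the space bound, and a bound of $O(n)$ on the total work done while wiring these objects together for the time bound. The two-dimensional construction lemma already supplies both bounds for $k=2$; I would obtain the general case by showing that each additional level behaves exactly like the second level of the $2d$ construction, so the per-level costs simply add across the $k$ levels.

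For the space bound I would first observe that, by Definition~\ref{d2}, the node set of $T_j$ is in bijection with $S_j$, and since $S_j$ consists of distinct length-$j$ prefixes of the $n$ input points we have $|S_j|\le n$. Summing over the $k$ levels gives at most $kn=O(n)$ tree nodes. Each node carries at most one cross link and is the target of at most one trie link, so these contribute $O(1)$ per node. Finally, every threaded trie sits at a cross link node and has height bounded by the fixed number of digits of a coordinate; hence each trie occupies constant space, and as there is at most one per node the tries add $O(n)$ in total. Summing the three contributions yields $O(n)$ space.

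For the time bound the cleanest route is to construct each level from coordinate-sorted order rather than by $n$ independent insertions. Because coordinate keys have fixed width, all of $S_1,\dots,S_k$ can be produced in sorted order by radix sort in $O(n)$ time; a height-balanced threaded tree is then built bottom-up from a sorted sequence in linear time, and a single left-to-right pass installs the inorder threads, the tries at the cross link nodes, and the trie links. The cross links of $T_j$ are set by scanning $T_{j+1}$ once and, for each $j$-prefix, selecting the node of minimum $(j{+}1)$-st coordinate, which the sorted order makes the first such node encountered; this too is linear per level. Over all $k$ levels the work is $O(kn)=O(n)$.

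The main obstacle, and the point I would scrutinise most carefully, is reconciling this $O(n)$ bound with an insertion-based construction: maintaining height balance after an insertion ordinarily forces $\Theta(\log n)$ rebalancing work, which would inflate the total to $O(n\log n)$. The constant-time position finding of Lemma~\ref{ins} removes only the \emph{search} cost, not the rebalancing cost, so the linear bound genuinely depends either on an amortized $O(1)$ rebalancing scheme for the $BITS$-tree or, as above, on bypassing per-insertion balancing altogether by building from sorted order. A secondary subtlety is verifying that the cross-link target is always well defined, namely that for every $j$-prefix at least one extending point exists in $S_{j+1}$; this follows because each element of $S_j$ is itself a prefix of some input point, and I would record it explicitly to justify that every cross link, and hence the constant-per-node accounting, is legitimate.
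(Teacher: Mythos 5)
Your space argument is essentially the paper's: it bounds $N(T_j)=|S_j|\le n$ per level, sums over the constant number of levels $k$, and charges each trie a constant because the key width is fixed. For the time bound, however, you take a genuinely different route. The paper treats construction as a sequence of $n$ insertions and argues each is $\theta(1)$: the insertion position in each $T_j$ is found in constant time via a cross link followed by a trie link, and rebalancing is dismissed with the claim that ``atmost one rotation is required to balance the tree.'' Your batch construction --- radix-sorting the fixed-width keys, building each height-balanced threaded tree bottom-up from sorted order, and installing threads, tries, cross links and trie links in linear passes --- reaches the same $O(n)$ bound while sidestepping per-insertion rebalancing entirely. The obstacle you flag is real and is precisely where the paper's proof is thinnest: in a standard height-balanced tree a single rotation suffices after insertion, but locating that rotation point and updating balance information along the insertion path is ordinarily $\Theta(\log n)$ work, so the paper's per-insertion $\theta(1)$ claim needs either an amortized rebalancing argument or additional structure that it does not supply. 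What each approach buys is complementary: the paper's incremental argument, if its constant-time rebalancing claim is accepted, directly yields the dynamic $\theta(1)$ insertion bound of Lemma~\ref{ins} as a byproduct, whereas your static construction proves Lemma~\ref{t1} more robustly but says nothing about subsequent dynamic updates. Your secondary observation --- that every cross-link target exists because each element of $S_j$ is a prefix of some input point --- is stated in the paper only informally (``at least one such point exists'') and is worth recording as you propose.
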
  
\begin{proof}
Since we construct $T_1, T_2 \ldots T_k$, such that $T_k$ at level $k$ has at most $n$ nodes, it follows that $N\left(T_i\right) \le N\left(T_{i+1}\right)$, $1 \le i < k$ and $N(T_k)$=$n$, where $N\left(T_i\right)$ is the number of nodes in $T_i$. Note that levels correspond to dimensions and hence may be used interchangeably. Also, the number of trie nodes is $O(n)$ as its height is constant. Therefore for $k$ levels, a $BITS$ \textit{k}d-tree uses $O(n)$ storage in the worst case as $k$ is a constant.  
Now, construction of $BITS$ \textit{k}d-tree is considered as a sequence of insertions. Each insertion, may or may not alter $T_i$, $1 \le i \le k$, a $BITS$ tree of a particular level. However, if a $BITS$-tree $T_j$ is altered, due to insertion, all trees $T_{j+1},T_{j+2}, \ldots T_k$ will be altered. Let $j$ be the least index such that the tree $T_j$ is altered. Thus, for $T_1, T_2,\ldots T_{j-1}$, with trie links and cross links, one can determine that the required values are already stored in those trees within constant time. Now, from a particular cross link in $T_{j-1}$ followed by a trie link in $T_j$, one can find a position for the new value in $T_j$. This requires only constant time. Then, while inserting the value if the tree is unbalanced, atmost one rotation is required to balance the tree. So, for $T_j$ too, it requires constant time. Let $n_j$ be the new node inserted in $T_j$. Now, by taking cross link of inorder successor of $n_j$, one can determine the position of the new node in $T_{j+1}$, and that as inorder predecessor of cross link node of inorder successor of $n_j$. This new node in $T_{j+1}$ need to have a trie, which again be created in constant time. Then, the process is to be continued for $T_{j+2} \ldots T_k$. Here, updation in each $T_i, 1\leq i \leq k$, takes only constant time and hence each insertion takes $\theta(1)$ time. So, construction of $BITS$ \textit{k}d-tree for $n$ points requires $O(n)$ time.
\end{proof}
\begin{table*}[t]
\caption{Theoretical comparison of \textit{k}d-trees, divided \textit{k}-d trees, range trees, \textit{k}-d Range~DSL-trees, layered range trees and the proposed $BITS$ \textit{k}d-trees.}
\label{tc}
\begin{center}
\footnotesize
\resizebox{\textwidth}{!} {%
\begin{tabular}{|l|l|l|l|l|}
\hline
\textbf{Description} & \textbf{Storage}  & \textbf{Construction} & \textbf{Update} & \textbf{Range Search} \\ \hline
\textit{k}d-Trees \cite{b5} &  $O(n)$ & $O(n \log n)$ & $O(\log ^k n)$ & $O(n^{1-1/k}+t)$\\ \hline
Divided \textit{k}-d trees \cite{km} & $O(n)$ & $O(n\log n)$ & $O(\log ^{k-1} n)$ & $O(n^{1-1/k}\log ^{1/k} n+t)$  \\ \hline
Range Trees\cite{b7} & $O(n \log ^{k-1} n)$ & $O(n \log^{k-1} n)$ & $O(\log ^ k n)$ & $O(\log^k n+t)$\\ \hline
\textit{k}d-Range DSL-Trees \cite{mb} & $O(n\log^{k-1} n)$ &$O(n\log^k n)$ & $O(n\log^{k-1} n)$  & $O(\log^k n+t)$\\ \hline
Layered Range Trees\cite{w} & $O(n \log ^{k-1} n)$ & $O(n \log^{k-1} n)$ & $O(\log ^ k n)$ & $O(\log^{k-1} n+t)$\\ \hline
BITS \textit{k}d-Trees & $O(n)$ & $O(n)$ & Ins.~$\theta(1)$~Del.~$O(\log n)$  & $\theta(t)$ \\ \hline
\end{tabular}%
}
\tiny
\noindent
\begin{tabularx}{\textwidth}{X}
$n$-number of points, $k$-dimensions, $t$-number of points reported.
\end{tabularx}
\end{center}
\normalsize
\end{table*}

\begin{lemma}
\label{ins}
Insertion and Deletion of a point in a $BITS$~\textit{k}d-tree can be respectively done in $\theta(1)$ and $O(\log n)$ time.\end{lemma}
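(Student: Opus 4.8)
The plan is to handle the two bounds separately, reusing the single-point analysis from the proof of Lemma~\ref{t1} for insertion and developing a level-by-level argument for deletion. For the insertion bound, I would note that the construction in Lemma~\ref{t1} is carried out precisely as a sequence of single insertions, and its proof already shows that one insertion alters at most the $k$ trees $T_1,\ldots,T_k$ while spending only constant time on each of them: the least altered level is identified through cross links and trie links, the insertion position in that tree is reached by one cross link followed by one trie link, at most one rotation restores the height balance, and every newly created node at the deeper levels (together with its trie) is placed in constant time using the cross link of an inorder successor. Because $k$ is a constant, a single insertion costs $\theta(1)$, which is the first half of the claim.

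For deletion of $p=(x_1,\ldots,x_k)$, the first step is to locate, for every level $j$, the node $u_j=(x_1,\ldots,x_j)$ that may have to be removed. Starting from the unique cross link node of $T_1$ and descending by alternately consulting a fixed-height trie and following a cross link, I would reach each $u_j$ in $O(1)$ time, so all $k$ target nodes are found in constant time. The removal then propagates from $T_k$ upward: $u_k=p$ is always deleted, and $u_{j-1}$ must be deleted exactly when no remaining point still carries the head $(x_1,\ldots,x_{j-1})$. Since points sharing a head occupy a contiguous inorder run, this condition can be tested in constant time at each level by inspecting, through the inorder threads, whether either neighbour of the node just removed still carries the shorter head. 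This yields a threshold level above which no tree is touched.

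The dominant cost comes from removing a single node from one height-balanced tree $T_j$. When the removed node has two children, its replacement is its inorder successor, which is supplied in $\theta(1)$ time by the threads, so no subtree search is incurred; but re-establishing the height balance after a deletion may propagate rebalancing rotations all the way to the root, costing $O(\log n)$ for that tree. I would further argue that the auxiliary links survive cheaply: rotations preserve both the inorder order and the identity of the stored nodes, so existing cross links and trie links stay valid, and the only real updates are the removal of one trie link at the relevant cross link node (constant time, the trie height being fixed) and, when the deleted node was the minimum of its run and hence a cross link target, the redirection of the parent-level cross link to the inorder successor (again $\theta(1)$ via threads). Summing $O(\log n)$ over the at most $k$ affected levels and using that $k$ is a constant gives the overall $O(\log n)$ bound for deletion.

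The step I expect to be the main obstacle is not the asymptotics of balanced-tree deletion, which is classical, but the bookkeeping that certifies $O(1)$ maintenance of the superimposed structure: one must verify that a rebalancing cascade inside $T_j$ can never invalidate a cross link or a trie link globally, and that both the test ``is this head still in use?'' and the redirection of a cross link to a new run-minimum are realizable in constant time using only the threads and the fixed-height tries. Once these local constant-time updates are justified, the asymmetry between insertion ($O(1)$ rebalancing rotations) and deletion ($O(\log n)$ rebalancing rotations) delivers the claimed $\theta(1)$ and $O(\log n)$ bounds.
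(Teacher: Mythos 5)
Your proposal follows essentially the same route as the paper: insertion is dispatched by citing the constant-time per-insertion analysis in the proof of Lemma~\ref{t1}, and deletion is charged to locating the node in constant time, replacing it via the inorder successor (available through the threads), and paying $O(\log n)$ for the rebalancing work in a height-balanced tree. The only notable differences are that the paper phrases the replacement step as a cascading inorder-successor replacement of at most $O(\log n)$ steps followed by $O(\log n)$ rotations, whereas you use a single thread-supplied replacement plus the rotation cascade, and that you additionally spell out details the paper leaves implicit --- which levels $T_j$ actually require a removal, and why the cross links and trie links survive in $O(1)$ time --- none of which changes the argument or the bound.
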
  
\begin{proof}
As per the description given in the proof of Lemma \ref{t1}, insertion of a point in BITS \textit{k}d-tree takes only $\theta(1)$ time. But for deletion, finding a node to be removed from a $BITS$-tree requires only constant time. However, if that node is not a leaf node a cascading replacement with inorder successor is required until reaching a leaf node to be removed physically. Certainly, the number of such replacements to be done cannot exceed $O(\log n)$. After that it may require a sequence of rotations on the path from the physically removed leaf to the root, and that too in at most $O(\log n)$ rotations. So, deletion of a point in $BITS$ \textit{k}d-tree requires $O(\log n)$ time.
\end{proof}
\section{Performance} 

Table \ref{tc} summarizes the performance of \textit{k}d-trees, divided \textit{k}-d trees, range trees, \textit{k}d-range $DSL$-trees and the $BITS $\textit{k}d- tree proposed in this work. Furthermore, our theoretical comparison of the $BITS$ \textit{k}d-tree is made with \textit{k}d-trees adapted for internal memory(pointer machine model) and not with any of the other bulk loading \textit{k}d-trees($RAM$ model). The results give an $\theta(t)$ query time using the $BITS$ \textit{k}d-tree that shows a reduction in time as compared to the existing bounds. Since we try to capitalize on the efficiency of balanced search trees at all the levels by using \textit{cross links} and \textit{trie links}, we ensure that the number of nodes visited during a range query is considerably reduced in $BITS$ \textit{k}d-tree. Observe that the storage is increased from $O(n)$ in \textit{k}d-trees to $O(n \log ^{k-1} n)$ in range trees while $BITS$ \textit{k}d-tree still maintains an $O(n)$. Notice that the update time for BITS\textit{k}d-tree has been reduced considerably. To summarize, although the storage requirements of $BITS$ \textit{k}d-tree are comparable to \textit{k}-d trees, divided \textit{k}-d trees, the construction and update time are improved considerably. Moreover, the overall query time is improved to $\theta(t)$ time where $t$ is the number of points reported as it prunes points falling outside the query region for each dimension. 
\vskip -2em
\section{Conclusion}
A $BITS$ \textit{k}d-tree for storing $k$-dimensional points having update and query operations efficiently than $kd$-trees is proposed. The main advantage of this tree is that it effectively handles the collinear points. As a result, number of nodes visited during search is much less compared to other \textit{k}d-tree variants that are either not height balanced or update operation is complex. In the case of height balanced \textit{k}d-trees, having better search efficiency, insertion is tedious. A \textit{k}-d range $DSL$ tree gives a logarithmic amortized worst case search time with efficient updates mainly for partial match queries and not for window queries. In $BITS$ \textit{k}d-tree, overall insertion time is $\theta(1)$. Moreover, points can be dynamically updated at each level. Since co-ordinate dimensions at each level are distributed and using threaded tries, we quickly find points falling within the query range. Also, points falling above and below the search range are pruned efficiently using cross links to the next level and inorder threads similar to the $BITS$-tree. In addition, \textit{threaded tries} introduced in this work link the node, having cross link, by means of trie links to find the points within the given range in constant time. Therefore, range search for points in a rectangular region using $BITS$~\textit{k}-d tree takes $\theta(t)$ time where $t$ is the number of points reported, and therefore the logarithmic factor in earlier worst case bounds is reduced. Hence it is definitely a remarkable improvement over $O(n^{1-1/k}+t)$ of \textit{k}d-trees and $O(\log^k n+t)$ time of \textit{k}-d range $DSL$ trees.
\bibliographystyle{plain}
\bibliography{BITS-kdtree-July-1-16-arxiv}

\begin{thebibliography}{10}

\bibitem{af}
P.~Afshani, L.~Arge, and K.G. Larsen.
\newblock Higher-dimensional orthogonal range reporting and rectangle stabbing
  in the pointer machine model.
\newblock In {\em Proceedings of the twenty-eighth annual symposium on
  Computational geometry}, pages 323--332. ACM, 2012.

\bibitem{pk}
P.~K. Agarwal.
\newblock {\em Range searching. In J. E. Goodman and J. O’Rourke, editors,
  CRC Handbook of Discrete and Computational Geometry}.
\newblock CRC Press, Inc, 2004.

\bibitem{als}
S.~Alstrup, G.~S. Brodal, and T.~Rauhe.
\newblock New data structures for orthogonal range searching.
\newblock In {\em Foundations of Computer Science, 2000. Proceedings. 41st
  Annual Symposium on}, pages 198--207. IEEE, 2000.

\bibitem{b5}
J.L. Bentley.
\newblock Multidimensional binary search tress used for associative searching.
\newblock {\em Communications of ACM}, 18(9):509--516, 1975.

\bibitem{b3}
J.L. Bentley.
\newblock Decomposable search problems.
\newblock {\em Information Processing Letters}, 8(5):5--9, June 1979.

\bibitem{b8}
J.L. Bentley.
\newblock Multidimensional binary search trees in database applications.
\newblock {\em IEEE Transactions on Software Engineering}, SE-5(4):333--340,
  1979.

\bibitem{b7}
J.L. Bentley.
\newblock Multidimensional divide and conquer.
\newblock {\em Communications of the ACM}, 23(4):214--229, April 1980.

\bibitem{bcko}
M.D. Berg, O.~Cheong, M.V. Kreveld, and M.~Overmars.
\newblock {\em Computational Geometry: algorithms and applications}.
\newblock Springer-Verlag, New York,USA, third edition, 2008.

\bibitem{ct}
T.M. Chan.
\newblock Persistent predecessor search and orthogonal point location on the
  word ram.
\newblock In {\em Proceedings of the Twenty-second Annual ACM-SIAM Symposium on
  Discrete Algorithms}, SODA '11, pages 1131--1145. SIAM, 2011.

\bibitem{clkm}
T.M. Chan, K.G. Larsen, and M.~P\u{a}tra\c{s}cu.
\newblock Orthogonal range searching on the ram, revisited.
\newblock In {\em Proceedings of the Twenty-seventh Annual Symposium on
  Computational Geometry}, SoCG '11, pages 1--10, New York, NY, USA, 2011. ACM.

\bibitem{mpc}
M.M.P. Crespo.
\newblock {\em Design, Analysis and Implementation of New Variants of
  Kd-trees}.
\newblock Master Thesis, Universitat Politecnica de Catalunya,Departament de
  Llenguatges i Sistemes Informatics, 2010.

\bibitem{ldjc}
L.~Devroye, J.~Jabbour, and C.~Zamora-Cura.
\newblock Squarish kd-trees.
\newblock {\em SIAM Journal of Computing}, 30:1678--1700, 2000.

\bibitem{w}
D.Willard.
\newblock New data structures for orthogonal queries.
\newblock {\em Harvard University}, TR:22--78, 1978.

\bibitem{ekh}
K.S. Easwarakumar and T.~Hema.
\newblock {BITS}-{T}ree-{A}n efficient data structure for segment storage and
  query processing.
\newblock {\em International Journal of Computers and Technology},
  11(10):3108--3116, December 2013.

\bibitem{mb}
M.G. Lamoureux and B.G. Nicolson.
\newblock Determinisitic skip lists for k-dimensional range search.
\newblock {\em Technical Report(TR95-098)}, pages 1--95, Novemeber 1995.

\bibitem{lw}
D.T. Lee and C.~K. Wong.
\newblock Worst-case analysis for region and partial region searches in
  multidimensional binary search trees and balanced quad trees.
\newblock {\em Acta Informatica}, pages 23--29, 1977.

\bibitem{nek}
Y.~Nekrich.
\newblock Orthogonal range searching in linear and almost-linear space.
\newblock {\em Computational Geometry}, 42(4):342--351, 2009.

\bibitem{nt}
S.~Nilsson and M.Tikkanen.
\newblock An experimental study of compression methods for dynamic tries.
\newblock {\em Algorithmica}, 33(1):19--33, 2002.

\bibitem{jo}
J.A. Orienstein.
\newblock Multidimensional tries used for associative searching.
\newblock {\em Information Processing Letters}, 14(4):150--157, June 1982.

\bibitem{ps}
F.P. Preparata and M.L. Shamos.
\newblock {\em Computational Geometry: An Introduction}.
\newblock Springer-Verlag, New York, 1985.

\bibitem{ns}
R.C.Nelson and H.Samet.
\newblock A consistent hierarchical representation for vector data.
\newblock In {\em Proceedings of the SIGGRAPH'86 Conference, Dallas},
  volume~20, pages 197--206, August 1986.

\bibitem{h2}
H.~Samet.
\newblock {\em Fundamentals of Multi-dimensional and Metric Data Structures}.
\newblock Academic Press, New York,USA, 1974.

\bibitem{h1}
H.~Samet.
\newblock {\em The Design and Analysis of Spatial Data Structures}.
\newblock Addison Wesley, 1990.

\bibitem{hth}
H.~Tropf and H.Herzog.
\newblock Multidimensional range search in dynamically balanced trees.
\newblock {\em Applied Informatics, Vieweg Verlag,Germany}, 2:71--77, 1981.

\bibitem{km}
M.J. van Kreveld and M.H. Overmars.
\newblock Divided k-d trees.
\newblock {\em Algorithmica}, 6:840--858, 1991.

\end{thebibliography}
\end{document}